\documentclass[10pt]{article}
\usepackage[letterpaper,margin=0.72in]{geometry}
\usepackage[T1]{fontenc}
\usepackage{lmodern}
\usepackage{microtype}
\usepackage{amsmath,amssymb,amsthm,mathtools}
\usepackage{booktabs,array}
\usepackage{float}
\usepackage{algorithm}
\usepackage{algpseudocode}
\algrenewcommand\algorithmicrequire{\textbf{Input:}}
\usepackage{tikz}
\usetikzlibrary{arrows.meta,calc,positioning,decorations.pathreplacing}
\usepackage{caption}
\usepackage[nocompress]{cite}
\usepackage[hidelinks]{hyperref}
\hypersetup{pdftitle={A 2.37332-Competitive Algorithm for Online Square Packing with Gravity},pdfauthor={Nichlas Langhoff Rasmussen (Alpha Energy ApS)},pdfsubject={Online square packing with Tetris and gravity constraints}}
\usepackage{url}

\newtheorem{lemma}{Lemma}
\newtheorem{theorem}[lemma]{Theorem}
\newtheorem{corollary}[lemma]{Corollary}
\newtheorem{proposition}[lemma]{Proposition}

\newcommand{\AsymmetricSlots}{\textnormal{\textsc{AsymmetricSlots}}}
\newcommand{\AS}{\text{\normalfont\textsc{AsymmetricSlots}}}
\newcommand{\OPT}{\operatorname{OPT}}
\newcommand{\CR}{\operatorname{CR}}
\newcommand{\Hvirt}{H_{\mathrm{virt}}}

\title{A 2.37332-Competitive Algorithm for Online Square Packing with Gravity}
\author{Nichlas Langhoff Rasmussen\thanks{Alpha Energy ApS. \href{mailto:nichlas.rasmussen@gmail.com}{\texttt{nichlas.rasmussen@gmail.com}}.}}
\date{}

\begin{document}
\maketitle
\vspace{-1.5em}

\begin{abstract}
We consider online packing of axis-parallel squares into a unit-width strip under the Tetris and gravity constraints: An incoming square must be lowered from above along a monotonic downwards path until it reaches support from below. Fekete, Kamphans, and Schweer [Algorithmica, 2014] gave an algorithm with asymptotic competitive ratio $34/13\approx2.6154$ in this model. We present \AsymmetricSlots, a recursive algorithm based on splitting each slot into a wide and narrow subslot. The proof uses a local charging argument: squares that are large relative to its associated slot pay for the height they create with their own area, while smaller squares are balanced between the two subslots and may use a bounded temporary credit. For a suitable split parameter $p^\star$, we prove
\[
\AS_{p^\star}(\sigma)\le 2.37332\,\OPT(\sigma)+O(1)
\]
for every input sequence $\sigma$. Additionally, we show that the same framework gives an algorithm with asymptotic competitive ratio $O(\kappa)$ for rectangles of aspect ratio at most $\kappa$, and a matching $\Omega(\kappa)$ lower bound shows that the dependence on $\kappa$ is asymptotically optimal. For the square algorithm, we give a lower bound of $2$ on its asymptotic competitive ratio.
\end{abstract}

\section{Introduction}

In online strip packing, items arrive one at a time without knowledge of future items and must be placed irrevocably in a strip of fixed width and unbounded height. In the model considered here, feasibility depends not only on the final geometry of the packing but also on how each item reaches its position: an arriving square must enter from above along a collision-free, monotonic downwards (never moving upwards) path and must come to rest on the strip bottom or on previously placed squares. Thus a geometrically empty section of the strip may be unusable because it is unreachable or lacks support.

The square version of this Tetris-and-gravity model was studied by Fekete, Kamphans, and Schweer~\cite{FeketeKamphansSchweer2014}. They prove that the classic \textsc{BottomLeft} algorithm has asymptotic competitive ratio $3.5$ and introduced \textsc{SlotAlgorithm}, which recursively partitions the strip into dyadic slots, rounds each square to determine candidate slots, and places it in the lowest candidate slot. They show that \textsc{SlotAlgorithm} has asymptotic competitive ratio $34/13\approx2.6154$, and establish a general lower bound of $3/2$ for deterministic online algorithms. They also show instances giving a lower bound of $2$ for \textsc{SlotAlgorithm} itself. To the best of our knowledge, the $34/13$ guarantee remain the best known upper bound for this square-packing model. Our main result improves this upper bound to $2.373318506\ldots$.

\paragraph{Our approach.}
\AsymmetricSlots{} replaces the dyadic hierarchy of \textsc{SlotAlgorithm} by an asymmetric recursive decomposition. Every slot of width $w$ is split into a \emph{wide child} of width $pw$ and a \emph{narrow child} of width $qw$, where $p\in(1/2,1)$ and $q=1-p$. At each slot, a square that does not fit the wide child is assigned to the current slot; a square that fits only the wide child is forced there; and a square that fits both children is routed to the less loaded child. Routing decisions are therefore local and are made while descending the hierarchy. See Table \ref{tab:routing-rules} a classification of these rules.

A similar analysis to the \textsc{SlotAlgorithm} for our new decomposition scheme does not yield an improve upper bound, instead we use a local charging argument and exploit the asymmetry. The asymmetric hierarchy separates placements into two useful types. Squares that are large relative to the current slot are forced, and their area can be used directly to account for the height they create. Smaller squares retain routing flexibility, allowing the algorithm to balance the two children locally. We capture the interaction between these two types of placements using a phase-based accounting argument. The key point is that the loss associated with flexible placements can be bounded locally and does not accumulate with the depth of the recursive hierarchy.

For an optimized split parameter $p^\star$, the phase analysis gives
\[
\AS_{p^\star}(\sigma)
    \le
    2.373318506\ldots\,A(\sigma)+O(1)
    \le
    2.373318506\ldots\,\OPT(\sigma)+O(1),
\]
where $A(\sigma)$ is the total area of the input squares and $\OPT(\sigma)$ is the height of an optimal offline packing. Thus \(\AS_{p^\star}\) has asymptotic competitive ratio at most $2.373318506\ldots$. The exact optimizing parameter and additive constant are derived in Section~\ref{sec:analysis}; the nearby rational split $p{:}q=11{:}8$ gives the explicit ratio $19/8=2.375$.

The conference version of Fekete et al. listed rectangles as a direction for future work, noting that they might require a different analysis~\cite{FeketeKamphansSchweer2009}. We pursue this direction when only allowing translation of rectangles of bounded aspect ratio. If every rectangle has aspect ratio at most $\kappa$, routing according to width and accounting for the rectangle height yield an asymptotic competitive ratio of $O(\kappa)$. Conversely, an alternating construction gives an $\Omega(\kappa)$ lower bound for every online algorithm in the present Tetris-and-gravity model. Thus the optimal dependence on the aspect-ratio bound is $\Theta(\kappa)$; see Section~\ref{sec:rectangles}. For the square algorithm itself, a repeated-square construction shows
\[
\CR(\AS_p)\ge2
\qquad\text{for}\qquad
\frac12<p<\frac{\sqrt5-1}{2},
\]
a range containing both $p^\star$ and $11/19$. This lower bound is algorithm-specific, just as the value $2$ is for \textsc{SlotAlgorithm}; the best-known general lower bound remains $3/2$~\cite{FeketeKamphansSchweer2014}. Consequently,
\[
2\le\CR(\AS_{p^\star})\le2.373318506\ldots,
\]
while the deterministic online problem in general currently lies between $3/2$ and $2.373318506\ldots$.

\paragraph{Related work.}
Strip packing is a classical problem in combinatorial optimization; see the surveys~\cite{LodiMartelloMonaci2002,WascherHaussnerSchumann2007,ChristensenKhanPokuttaTetali2017}. It is strongly NP-hard via bin packing~\cite{GareyJohnson1978}, and a standard reduction from \textsc{Partition} gives the $3/2$ barrier for polynomial-time absolute approximation unless $\mathrm{P}=\mathrm{NP}$~\cite{GareyJohnson1979}. Representative milestones include Sleator's $2.5$-approximation~\cite{Sleator1980}, Steinberg's absolute approximation ratio $2$~\cite{Steinberg1997}, and the asymptotic fully polynomial-time approximation scheme of Kenyon and R\'emila~\cite{KenyonRemila2000}. The current best-known general absolute approximation ratio is $5/3+\varepsilon$~\cite{HarrenJansenPradelVanStee2014}. Allowing pseudo-polynomial running time, Jansen and Rau obtained a $(5/4+\varepsilon)$-approximation~\cite{JansenRau2019}, while G\'alvez et al.\ obtained a tight $(3/2+\varepsilon)$-approximation for skewed instances~\cite{GalvezEtAl2023}. More recently, Hougardy and Zondervan obtained a $13/6$ approximation using a suitable ordering for Bottom-Left~\cite{HougardyZondervan2026}. These results concern the geometry of a final packing and do not impose the reachability and support restrictions considered here.

Ordinary online strip packing was studied by Brown, Baker, and Katseff~\cite{BrownBakerKatseff1982}, with subsequent work on shelf algorithms and connections to bin packing~\cite{BakerSchwarz1983,CsirikWoeginger1997,HanIwamaYeZhang2007,YeHanZhang2009}. Related square-specific models include online packing of squares and cubes into fixed-size bins~\cite{EpsteinVanStee2005,EpsteinMualem2023} and online square-into-square packing~\cite{FeketeHoffmann2017}. Azar and Epstein~\cite{AzarEpstein1997} introduced a Tetris-type reachability constraint for online rectangle packing without requiring support from below; they obtained a constant competitive ratio when rotations are allowed, while unrestricted nonrotatable rectangles admit no constant competitive ratio. For \textsc{BottomLeft} in the square setting, Hougardy and Zondervan~\cite{HougardyZondervan2024} gave a $10/3-\varepsilon$ lower bound for a worst ordering, and their construction also applies to the online Tetris-and-gravity version, nearly matching the $3.5$ upper bound of Fekete et al. This result is specific to \textsc{BottomLeft} and is separate from the general upper-bound question addressed here.

\paragraph{Organization.}
Section~\ref{sec:prelim} formalizes the online Tetris-and-gravity model and introduces the notation used throughout the paper. Section~\ref{sec:algorithm} presents \AsymmetricSlots{}, including the asymmetric slot hierarchy, the local routing rule, and the proof that its placements are physically feasible. Section~\ref{sec:analysis} develops the local phase invariant, derives the global competitive bound, optimizes the split parameter, and gives a lower bound for the square-packing algorithm. Finally, Section~\ref{sec:rectangles} extends the method to rectangles of bounded aspect ratio and proves matching upper and lower bounds on the dependence on the aspect ratio.

\section{Preliminaries}\label{sec:prelim}

Let $S=[0,1]\times[0,\infty)$ be the strip. At step $i$, a square $Q_i$ of side length $a_i\in(0,1]$ is revealed and must be placed irrevocably into $S$ under the Tetris-and-graviry constraints before $Q_{i+1}$ is revealed. For a finite input $\sigma=(Q_1,\ldots,Q_n)$, let $A(\sigma):=\sum_{i=1}^n a_i^2$ denote the total square area. A square with lower-left corner $(x_i,y_i)$ occupies $[x_i,x_i+a_i]\times[y_i,y_i+a_i]$. The interiors of distinct squares must be disjoint and previously placed squares may not be moved. Squares remain axis-parallel throughout an insertion. Rotations are not allowed.

An insertion satisfies the Tetris constraint if the arriving square can be translated continuously from a position above the current packing to its final position without intersecting any previously placed square. It satisfies the gravity constraint if the vertical coordinate of its lower-left corner never increases along that path and, in its final position, its bottom side has positive-length intersection with either the strip bottom or the union of top sides of earlier squares. A packing is feasible if every insertion satisfies both constraints. Its height is the largest occupied vertical coordinate.

Let $\OPT(\sigma)$ denote the minimum height of an offline packing of the squares in $\sigma$ into $S$. The offline packing knows the entire input in advance and may place the squares in any order. Since the strip has unit width,
\begin{equation}
A(\sigma)\le\OPT(\sigma).
\label{eq:area-lb}
\end{equation}

For an online algorithm $\mathcal A$, let $\mathcal A(\sigma)$ denote the height of its packing on $\sigma$. We say that $\mathcal A$ has asymptotic competitive ratio at most $\rho$ if there exists a constant $\beta\ge0$, independent of $\sigma$, such that
\[
\mathcal A(\sigma)\le\rho\OPT(\sigma)+\beta
\]
for every finite input $\sigma$. The asymptotic competitive ratio $\CR(\mathcal A)$ is the infimum of all such $\rho$.

\section{The \AsymmetricSlots{} Algorithm}\label{sec:algorithm}

Fix $p\in(1/2,1)$ and put $q:=1-p$. The hierarchy is fixed before any square arrives.


\paragraph{Slot hierarchy.}

A slot is a vertical substrip
\[
T=I(T)\times[0,\infty),
\]
where
\[
I(T)=[x(T),x(T)+w(T)]\subseteq[0,1]
\]
and $w(T)$ is the slot width. Where the root $T_0 = S$. Every slot $T$ of width $w$ is partitioned into a wide child $T_w$ and a narrow child $T_n$ with
\[
I(T_w)=[x(T),x(T)+pw],
\]
\[
I(T_n)=[x(T)+pw,x(T)+pw+qw].
\]
The split is repeated recursively. After $d$ descents, the current width is at most $p^d$.
\begin{figure}[H]
\centering
\begin{tikzpicture}[
  x=10.8cm,
  y=0.68cm,
  font=\small,
  slot label/.style={inner sep=0pt,font=\footnotesize}
]
  \draw (0,2.55) rectangle (1,2.9);
  \node[left] at (0,2.72) {root};

  \draw (0,1.5) rectangle (1,1.85);
  \draw (0.58,1.5)--(0.58,1.85);
  \node[left] at (0,1.67) {one level};
  \node[slot label] at (0.29,1.67) {$p$};
  \node[slot label] at (0.79,1.67) {$q$};

  \draw (0,0.34) rectangle (1,0.90);
  \draw (0.3364,0.34)--(0.3364,0.90);
  \draw (0.58,0.34)--(0.58,0.90);
  \draw (0.8236,0.34)--(0.8236,0.90);
  \node[left] at (0,0.62) {two levels};
  \node[slot label] at (0.1682,0.62) {$p^2$};
  \node[slot label] at (0.4582,0.62) {$pq$};
  \node[slot label] at (0.7018,0.62) {$qp$};
  \node[slot label] at (0.9118,0.62) {$q^2$};
\end{tikzpicture}
\caption{The nested slot geometry. The drawing uses approximate widths only for visualization. Analytically the widths are exactly $p$ and $q=1-p$. Candidate slots are nested regions of one physical strip, not separate bins.}
\label{fig:slots}
\end{figure}

Shared slot boundaries have measure zero and play no role in area, positive-length support, or height. A square of side $a$ fits a slot $T$ if $a\le w(T)$.

\paragraph{Virtual height profile.}

The algorithm maintains a piecewise-constant virtual height profile
\[
h:[0,1]\to[0,\infty),
\]
initially $h\equiv0$. For a slot $T$, its virtual height is
\[
\tau(T):=\sup_{x\in\operatorname{int}I(T)}h(x),
\]
and the global virtual height is
\[
\Hvirt:=\sup_{x\in(0,1)}h(x).
\]
If a square of side $a$ is assigned to target slot $T$, let $b=\tau(T)$ immediately before the update. The virtual update is
\[
h(x)\leftarrow b+a\qquad x\in\operatorname{int}I(T).
\]
This full-slot virtual update is bookkeeping only. The physical square remains an $a\times a$ square.

\paragraph{Routing rule.}

Each square starts at the root. At a current slot $T$ of width $w$, exactly one of the following cases applies:

\begin{table}[H]
    \centering
    \begin{tabular}{@{}lll@{}}
        \toprule
        case & side-length range & action \\
        \midrule
        Direct   & $a > pw$           & assign to $T$ and stop \\
        Forced   & $qw < a \leq pw$   & continue in $T_w$ \\
        Flexible & $a \leq qw$        & continue in the child of smaller $\tau$ \\
        \bottomrule
    \end{tabular}
    \caption{Routing rules}
    \label{tab:routing-rules}
\end{table}

\noindent Ties in the flexible case are broken in favor of the wide child, making the algorithm deterministic.

Algorithm~\ref{alg:as-insert} gives the complete routing and insertion rule for one arriving square.
\begin{algorithm}[H]
\caption{Insertion rule for $\AS_p$}
\label{alg:as-insert}
\begin{algorithmic}[1]
\Require An arriving square $Q$ of side length $a$
\State $T\gets T_0$
\While{$a\le p\,w(T)$}
  \If{$a>q\,w(T)$}
    \State $T\gets T_w$ \Comment{forced descent}
  \ElsIf{$\tau(T_w)\le\tau(T_n)$}
    \State $T\gets T_w$ \Comment{flexible descent, wide on a tie}
  \Else
    \State $T\gets T_n$ \Comment{flexible descent}
  \EndIf
\EndWhile
\State $b\gets\tau(T)$ \Comment{$T$ is the target slot}
\State Insert the square physically in $T$
\State $h(x)\gets b+a$ for every $x\in\operatorname{int}I(T)$
\end{algorithmic}
\end{algorithm}

The descent is well defined. Whenever the algorithm continues to a child, the routing condition guarantees that the square fits that child. Since after $d$ descents the current width is at most $p^d$ and $p<1$, every square of positive side length eventually stops. Every target slot therefore satisfies
\begin{equation}
pw(T)<a\le w(T).
\label{eq:target-range}
\end{equation}

For illustration, let $p=11/19$ and $q=8/19$. Table~\ref{tab:routing} shows the three routing cases for a sequence of 8 squares, and Figure~\ref{fig:eight-square-packing} shows the resulting physical packing. Here W and N denote one descent into the wide and narrow child, respectively. Each square starts at the root. A direct assignment overwrites the virtual height profile on its target slot but never moves an earlier physical square.

\begin{center}
\begin{minipage}[t]{0.59\textwidth}
\vspace{0pt}
\captionsetup{hypcap=false}
\centering
\footnotesize
\begin{tabular}{@{}cccc@{}}
\toprule
square & side & route / target & routing summary\\
\midrule
$Q_1$ & $1/2$ & W & forced, then direct\\
$Q_2$ & $1/4$ & N & flexible, then direct\\
$Q_3$ & $1/5$ & NW & flexible, forced, then direct\\
$Q_4$ & $3/20$ & NN & flexible twice, then direct\\
$Q_5$ & $1/10$ & NNW & flexible twice, forced, then direct\\
$Q_6$ & $4/25$ & WWW & two wide tie-breaks, forced, then direct\\
$Q_7$ & $3/10$ & N & flexible, then direct\\
$Q_8$ & $3/5$ & root & direct at the root\\
\bottomrule
\end{tabular}
\captionof{table}{Routing in the eight-square example for $p=11/19$.}
\label{tab:routing}
\end{minipage}\hfill
\begin{minipage}[t]{0.36\textwidth}
\vspace{0pt}
\captionsetup{hypcap=false}
\centering
\begin{tikzpicture}[x=4.05cm,y=4.05cm,font=\scriptsize]
  \draw[densely dashed,gray!65] (0.578947,0)--(0.578947,1.4);
  \draw[densely dashed,gray!65] (0.822715,0)--(0.822715,1.4);
  \draw[densely dashed,gray!65] (0.925354,0)--(0.925354,1.4);

  \filldraw[fill=blue!16]   (0,0) rectangle (0.5,0.5);
  \node at (0.25,0.25) {$Q_1$};
  \filldraw[fill=orange!22] (0.578947,0) rectangle (0.828947,0.25);
  \node at (0.703947,0.125) {$Q_2$};
  \filldraw[fill=green!18]  (0.578947,0.25) rectangle (0.778947,0.45);
  \node at (0.678947,0.35) {$Q_3$};
  \filldraw[fill=yellow!28] (0.822715,0.25) rectangle (0.972715,0.4);
  \node at (0.897715,0.325) {$Q_4$};
  \filldraw[fill=red!14]    (0.822715,0.4) rectangle (0.922715,0.5);
  \node at (0.872715,0.45) {$Q_5$};
  \filldraw[fill=blue!24]   (0,0.5) rectangle (0.16,0.66);
  \node at (0.08,0.58) {$Q_6$};
  \filldraw[fill=red!22]    (0.578947,0.5) rectangle (0.878947,0.8);
  \node at (0.728947,0.65) {$Q_7$};
  \filldraw[fill=violet!18] (0,0.8) rectangle (0.6,1.4);
  \node at (0.3,1.1) {$Q_8$};

  \draw[thick] (0,0) rectangle (1,1.4);
  \node[below] at (0,0) {$0$};
  \node[below] at (1,0) {$1$};
\end{tikzpicture}

\captionof{figure}{Physical packing for the example. Dashed lines indicate relevant slot boundaries.}
\label{fig:eight-square-packing}
\end{minipage}
\end{center}


\paragraph{Physical feasibility.}
The virtual height profile controls routing but is not a physical packing: a square may fall below the virtual height of its target slot before reaching support. Suppose the current packing consists of squares $Q_1,\ldots,Q_k$,
where $Q_i$ has side length $a_i$ and lower-left corner $(x_i,y_i)$.
For every horizontal coordinate $x\in(0,1)$ that is not a hierarchy
boundary, define 
\[
g(x):=\max\Bigl(\{\,y_i+a_i:x\in(x_i,x_i+a_i)\,\}\cup\{0\}\Bigr).
\]
In words, $g(x)$ is the height at horizontal
position $x$: among all placed squares whose horizontal span contains
$x$, it records the height of the highest top edge, and it is $0$ if no
such square exists. 

We say that the virtual profile \emph{dominates} the physical packing if $g(x)\le h(x)$ for every such $x$.

\begin{lemma}\label{lem:physical}
Suppose a square of side $a$ has target slot $T$, and let $b=\tau(T)$ before the virtual update. Align the square with the left endpoint $x(T)$ of $I(T)$ above the current packing, lower it vertically to bottom coordinate $b$, and continue until it first reaches support. The resulting insertion satisfies the Tetris and gravity constraints. After setting $h(x)=b+a$ on $\operatorname{int}I(T)$, the new virtual profile dominates the new physical packing whenever the old virtual profile dominated the old packing.
\end{lemma}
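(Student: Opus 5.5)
We prove the lemma by induction over insertions. The invariant is that the virtual profile dominates the physical packing, which holds trivially at the start since $g\equiv0\equiv h$. We assume domination before the current insertion and verify three things in order: the vertical descent is collision-free down to bottom coordinate $b$; the continued descent stops with valid support; and the updated profile still dominates.

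\textbf{Step 1 (free descent to height $b$).} The square occupies the horizontal span $J=[x(T),x(T)+a]$. By \eqref{eq:target-range}, $a\le w(T)$, so $J\subseteq I(T)$. For any non-boundary $x\in\operatorname{int}J$, domination gives $g(x)\le h(x)\le\tau(T)=b$. Hence no placed square whose span meets $\operatorname{int}J$ rises above $b$. Squares touching $J$ only at an endpoint meet the moving square in a set of measure zero and do not obstruct it, since interiors remain disjoint. So the purely vertical path from above the packing down to bottom coordinate $b$ avoids all interiors, and it is monotone downward.

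\textbf{Step 2 (continuing until support).} From height $b$ we keep lowering vertically. The square stops at height $y^*=\max_{x\in\operatorname{int}J}g(x)$, taking the supremum over non-boundary $x$ with $0$ if no square is below. This maximum is attained on an open subinterval, because $g$ is piecewise constant, being determined by finitely many squares with open spans. Therefore the bottom side has positive-length contact with either a top side or the strip floor. Before reaching $y^*$, no interior intersection can occur, by the definition of $g$ as the highest top edge over each point. The main care needed here is exactly this point: $g$ is only defined off hierarchy boundaries, and square edges may coincide with those boundaries. I would argue that $g$ takes finitely many values on finitely many open intervals, so the maximum over $\operatorname{int}J$ is realised on an interval of positive length, which gives positive-length support. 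Both constraints then hold, and $y^*\le b$.

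\textbf{Step 3 (domination after update).} After placement, the new $g$ changes only on $\operatorname{int}J$, where it becomes $y^*+a\le b+a$. The new $h$ equals $b+a$ on $\operatorname{int}I(T)\supseteq\operatorname{int}J$, so domination holds there. Elsewhere in $\operatorname{int}I(T)$, the function $g$ is unchanged and satisfies $g\le\tau(T)=b<b+a$. Outside $I(T)$, neither $g$ nor $h$ changes. This closes the induction.
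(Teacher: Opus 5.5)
Your proof is correct and follows essentially the same route as the paper's: domination forces every earlier square whose horizontal interior overlaps the footprint to have top at most $b$, so the square drops freely to $b$. It then continues down to the highest such top, which you write as the maximum of $g$ over $\operatorname{int}J$ and which equals the paper's maximum over the overlapping set $\mathcal O$. Finally, the update to $b+a$ on $\operatorname{int}I(T)$ preserves domination.

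The differences are cosmetic. You frame the argument as an explicit induction. You also derive positive-length support from the piecewise constancy of $g$, where the paper simply notes that a square attaining the maximum overlaps the new square over positive length.
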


\begin{figure}[H]
\centering
\begin{tikzpicture}[x=1cm,y=1cm,font=\small,
  box label/.style={font=\footnotesize,align=center,inner sep=1pt}]
  \draw[thick] (0,0)--(10,0);
  \draw[thick] (0.8,0)--(0.8,4.4);
  \draw[thick] (9.2,0)--(9.2,4.4);
  \node[below] at (5,0) {$I(T)$};

  \filldraw[fill=gray!25] (1.3,0) rectangle (3.1,0.75);
  \node[box label,text width=1.5cm] at (2.2,0.37) {supporting\\obstacle};

  \filldraw[fill=blue!15] (3.0,0.75) rectangle (5.2,2.95);
  \node[box label,text width=1.65cm] at (4.1,1.8) {physical\\square};

  \filldraw[fill=gray!35] (6.7,0) rectangle (8.2,3.3);
  \node[box label,text width=1.2cm] at (7.45,1.65) {tall\\obstacle\\sets $b$};

  \draw[dashed] (0.8,3.3)--(9.2,3.3);
  \node[right] at (9.2,3.3) {$b=\tau(T)$};
  \draw[dashed] (0.8,0.75)--(5.2,0.75);
  \node[left] at (0.8,0.75) {$y^\star$};

  \draw[very thick] (0.8,3.3) rectangle (9.2,4.0);
  \node[box label,text width=3.6cm] at (5,3.65) {virtual update block};
\end{tikzpicture}
\caption{The virtual update spans the entire target slot from height $b$ to $b+a$, while the physical square occupies only its own footprint and can fall to a lower support height $y^\star$.}
\label{fig:physical}
\end{figure}

\begin{proof}
By~\eqref{eq:target-range}, the arriving square's horizontal span $[x(T),x(T)+a]$ lies inside $I(T)$. Let $\mathcal O$ be the set of earlier squares whose horizontal interiors overlap the arriving square. Every square in $\mathcal O$ has top coordinate at most $b$: otherwise, the positive-length overlap interval contains a point $x$ that is not a hierarchy boundary, and there $g(x)>b$, while domination gives $g(x)\le h(x)\le\tau(T)=b$, a contradiction. Hence, after the initial horizontal alignment above the packing, the new square has a collision-free vertical path down to bottom coordinate $b$.

Let $y^\star$ be the maximum of $0$ and the top coordinates of the squares in $\mathcal O$. Lower the square further to bottom coordinate $y^\star$. The path remains collision-free by definition of $y^\star$. If $y^\star>0$, a square attaining this maximum overlaps the new square over positive horizontal length and therefore supports it. If $y^\star=0$, the strip bottom supports it. The vertical coordinate never increases, so the insertion is feasible under the Tetris and gravity constraints.

The final top coordinate is at most $b+a$. On $I(T)$ the new virtual value is $b+a$. Outside $I(T)$ the virtual profile is unchanged. Thus domination is preserved.
\end{proof}

The restriction to non-boundary coordinates causes no loss: hierarchy boundaries form a countable set, while every square has positive horizontal width, so the interior of every square contains a non-boundary coordinate. Lemma~\ref{lem:physical} therefore implies that the physical packing height is at most $\Hvirt$.

The hierarchy can be maintained implicitly. Only visited slots need be stored. Each node records enough information to recover its two child heights together with a lazy ``uniform height'' value. A direct update makes the slot uniform, so its stored descendants can be discarded. Routing a square of side $a$ visits $O(1+\log(1/a))$ slots, and the same bound holds for the number of newly created nodes in that insertion. This is an operation-count bound in a unit-cost exact-arithmetic model. Bit complexity depends on the representation of the input side lengths.

For the rational split $p=11/19$, all hierarchy widths are rational products of $p$ and $q$ and comparisons can be performed exactly when the input side lengths are represented exactly. The minimizing parameter $p^\star$ is algebraic. One may either work in a real-RAM/algebraic-number model or choose a rational $p>p^\star$ arbitrarily close to $p^\star$. Continuity then gives a competitive coefficient arbitrarily close to $\rho^\star$.

\section{Analysis for Squares}\label{sec:analysis}

\subsection{The local phase invariant}\label{sec:local-phase}

\paragraph{Local accounting.}

Fix a slot of width $w$. Multiplying a virtual-height increase by $w$ gives the area of the corresponding rectangle over the slot. This motivates an accounting rule with coefficient $\rho$ against square area and an additive open-phase credit $\delta w^2$.

A square that is forced at the current slot satisfies $a>qw$ and can increase the slot maximum by at most $a$. Set
\[
\rho:=\frac1q.
\]
Then
\[
wa-\rho a^2\le0
\qquad\text{whenever }a\ge qw.
\]
so every forced square can pay for its own possible height increase.

For a flexible square, $0<a\le qw$, the concave expression $wa-\rho a^2$ is maximized at $a=qw/2$. Hence
\[
wa-\rho a^2\le\frac q4w^2.
\]
The two subslots retain credits $\delta p^2w^2$ and $\delta q^2w^2$, so the credit released at the parent scale is
\[
\delta w^2-\delta(p^2+q^2)w^2
=2\delta pq\,w^2.
\]
Thus flexible squares are covered whenever
\[
\delta\ge\frac1{8p}.
\]

A square assigned directly to the current slot has $a>pw$ and closes the current phase. Writing $r=a/w\ge p$, the function $r\mapsto\rho r^2-r$ is increasing. Such a square can therefore repay the phase credit whenever
\[
\delta\le\rho p^2-p
=\frac{p(2p-1)}q.
\]

Hence a feasible credit exists if and only if
\begin{equation}
16p^3-8p^2+p-1\ge0.
\label{eq:cubic}
\end{equation}
We use the smallest admissible value,
\[
\delta:=\frac1{8p},
\]
which minimizes the additive term that may survive at the root.

\paragraph{Phase decomposition.}

Fix a slot $T$ of width $w$ and consider only squares that, when their routing enters $T$, are ultimately assigned to $T$ or to one of its descendants. At the start of a local $T$-phase, the virtual profile is constant on $I(T)$. Let this common baseline be $b$.

An open $T$-phase contains no square whose target is exactly $T$. A completed $T$-phase ends with the first square whose target is $T$. That direct square performs a full-slot virtual update and closes the phase. For an open phase define
\[
M:=\sup_{x\in\operatorname{int}I(T)}h(x)-b,
\]
and let $A$ be the sum of physical square areas in the phase. For a completed phase let $\Delta$ be the final uniform virtual rise above $b$, including the closing square.

For an open phase, we call $wM$ the \emph{slot-height product}. Figure~\ref{fig:phases}(a) shows it on the parent slot. The rectangle begins at the phase baseline $b$, spans the full slot width $w$, and reaches the current level $\tau(T)=b+M$. Panel~(b) shows the same child-phase history followed by the full-slot virtual update caused by the closing direct square of side $a$. The child phases labeled ``final open'' are the open child phases present immediately before that parent-closing update. The physical square itself is not drawn, since its gravity-feasible placement was established in Lemma~\ref{lem:physical}. The figure is schematic: the rectangular blocks represent virtual-height contributions and phase structure, not physical rectangles in the packing.

\begin{figure}[H]
\centering
\begin{tikzpicture}[x=1cm,y=1cm,font=\small]
  \begin{scope}[shift={(0,0)}]
    \fill[gray!8] (0,0) rectangle (5.2,2.45);
    \draw[thick] (0,0) rectangle (5.2,2.45);
    \draw[dashed] (3.05,0)--(3.05,2.45);

    \filldraw[fill=gray!20] (0.22,0) rectangle (2.83,0.58);
    \node at (1.53,0.29) {completed};
    \filldraw[fill=gray!20] (0.22,0.58) rectangle (2.83,1.12);
    \node at (1.53,0.85) {completed};
    \filldraw[fill=white] (0.22,1.12) rectangle (2.83,2.45);
    \node at (1.53,1.79) {open};

    \filldraw[fill=gray!20] (3.27,0) rectangle (4.98,0.88);
    \node at (4.13,0.44) {completed};
    \filldraw[fill=white] (3.27,0.88) rectangle (4.98,1.82);
    \node at (4.13,1.35) {open};

    \node[fill=gray!8,inner sep=1.5pt,font=\scriptsize] at (4.02,2.20)
      {area $wM$};

    \draw[decorate,decoration={brace,mirror,amplitude=4pt}]
      (0,-0.38)--(5.2,-0.38)
      node[midway,below=5pt] {$w$};
    \draw[<->] (-0.28,0)--(-0.28,2.45);
    \node[left] at (-0.32,1.225) {$M$};
    \node[right] at (5.25,0) {$b$};
    \node[right] at (5.25,2.45) {$\tau(T)$};

    \node[below=2pt] at (1.53,0) {wide child};
    \node[below=2pt] at (4.13,0) {narrow child};
    \node[font=\small] at (2.60,-1.02) {(a) open $T$-phase};
  \end{scope}

  \begin{scope}[shift={(6.8,0)}]
    \draw[thick] (0,0) rectangle (5.2,3.25);
    \draw[dashed] (3.05,0)--(3.05,2.45);

    \filldraw[fill=gray!20] (0.22,0) rectangle (2.83,0.58);
    \node at (1.53,0.29) {completed};
    \filldraw[fill=gray!20] (0.22,0.58) rectangle (2.83,1.12);
    \node at (1.53,0.85) {completed};
    \filldraw[fill=white] (0.22,1.12) rectangle (2.83,2.45);
    \node at (1.53,1.79) {final open};

    \filldraw[fill=gray!20] (3.27,0) rectangle (4.98,0.88);
    \node at (4.13,0.44) {completed};
    \filldraw[fill=white] (3.27,0.88) rectangle (4.98,1.82);
    \node at (4.13,1.35) {final open};

    \filldraw[fill=gray!10,very thick] (0,2.45) rectangle (5.2,3.25);
    \node[font=\scriptsize] at (2.60,2.85) {full-slot virtual update};

    \node[left] at (-0.08,0) {$b$};
    \node[anchor=west] at (5.30,2.45) {$\tau(T)$};
    \node[anchor=west] at (5.30,3.25) {$b+\Delta$};

    \draw[<->] (6.35,2.45)--(6.35,3.25);
    \node[anchor=west] at (6.46,2.85) {$a$};

    \draw[<->] (7.25,0)--(7.25,3.25);
    \node[anchor=west] at (7.38,1.625) {$\Delta$};

    \node[below=2pt] at (1.53,0) {wide child};
    \node[below=2pt] at (4.13,0) {narrow child};
    \node[font=\small] at (2.60,-1.02) {(b) completed $T$-phase};
  \end{scope}
\end{tikzpicture}
\caption{The two forms of a local $T$-phase. The figure is schematic: the rectangular blocks represent virtual-height contributions and phase structure, not physical rectangles in the packing. (a) In an open phase, the rectangle from $b$ to $\tau(T)=b+M$ spans the slot width $w$ and has area $wM$, the slot-height product. The child histories consist of completed phases followed by at most one open phase. (b) The labels ``final open'' mark the open child phases present immediately before a direct square of side $a$ closes the parent phase. Its full-slot virtual update raises the level from $\tau(T)=b+M$ to $b+\Delta$, where $\Delta=M+a$. The band represents this virtual update, not the physical square.}
\label{fig:phases}
\end{figure}

Within either child of $T$, the history during one parent phase consists of zero or more completed child phases, followed by at most one open child phase. Every completed child phase ends with a full-child update, which supplies the flat baseline for the next child phase.

\begin{lemma}\label{lem:phase}
Assume~\eqref{eq:cubic}, and let $\rho=1/q$ and $\delta=1/(8p)$. For every slot $T$ of width $w$,
\begin{align}
wM&\le\rho A+\delta w^2 &&\text{for every open $T$-phase},\label{eq:open-phase}\\
w\Delta&\le\rho A &&\text{for every completed $T$-phase}.\label{eq:closed-phase}
\end{align}
\end{lemma}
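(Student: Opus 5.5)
The proof is a simultaneous induction over the slot hierarchy, proving \eqref{eq:open-phase} and \eqref{eq:closed-phase} together for every slot $T$. The input $\sigma$ is finite, so its squares have a minimum side length $a_{\min}>0$. A slot of width $w<a_{\min}$ is never the target of any square and never has a square routed into it, so all its phases are empty. For such slots $M=\Delta=A=0$ and both inequalities hold trivially. Since $p<1$, widths shrink geometrically along the hierarchy, so there are only finitely many widths $w\ge a_{\min}$. I therefore induct upward on width: I assume both inequalities for the two children of $T$, which have widths $pw$ and $qw$, and prove them for $T$.

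\emph{Completed phases from open phases.} This step is immediate once \eqref{eq:open-phase} is known for $T$. Suppose a completed $T$-phase is closed by a direct square of side $a$. Let $M$ be the rise of the open phase just before that square, and let $A'=A-a^2$ be the area of the squares before it. The closing update sets the profile to $\tau(T)+a$, so $\Delta=M+a$. Applying \eqref{eq:open-phase} gives
\[
w\Delta\le\rho A'+\delta w^2+wa .
\]
By \eqref{eq:target-range} we have $r=a/w\ge p$. Since $r\mapsto\rho r^2-r$ is increasing, $\rho a^2-wa\ge(\rho p^2-p)w^2\ge\delta w^2$. This is exactly the admissibility condition \eqref{eq:cubic} with $\delta=1/(8p)$. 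Hence $w\Delta\le\rho A$.

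\emph{Open phases.} Fix an open $T$-phase with baseline $b$. Every square in it enters one child, forced into $T_w$ or flexibly into either child. By the phase decomposition, the history in child $c\in\{w,n\}$ is a sequence of completed child phases followed by at most one open child phase. Let $M_c$ be the rise of child $c$ above $b$. By induction, the completed child phases contribute at most $\rho A_{c,\mathrm{done}}/w(T_c)$ to $M_c$, and the open child phase contributes at most $(\rho A_{c,\mathrm{open}}+\delta\,w(T_c)^2)/w(T_c)$. Multiplying by the child widths and adding gives
\[
pw M_w+qw M_n\le\rho A+\delta(p^2+q^2)w^2 .
\]
Since $M=\max(M_w,M_n)$, it remains to pay for the excess of the maximum over the weighted average. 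That excess is $qw(M_w-M_n)$ if the wide child is higher, or $pw(M_n-M_w)$ if the narrow child is higher. The budget for it is the released credit $2\delta pq\,w^2$, together with any slack in the forced squares.

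\emph{Bounding the imbalance (main obstacle).} This is the step I expect to be hardest. I would look at the last square that raised the currently higher child. A square of side $a$ raises the $\tau$ of the child it enters by at most $a$, because its target descendant's level increases from at most $\tau(T_c)$ to at most $\tau(T_c)+a$.

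If that square was flexible, then before it arrived the chosen child was not higher than the other (ties go to the wide child). So the imbalance afterwards is at most $a\le qw$. I would then peel that square off and pay its excess cost $wa-\rho a^2\le\tfrac q4w^2\le2\delta pq\,w^2$ from the released credit, which is where $\delta\ge1/(8p)$ is used.

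If that square was forced, then $a>qw$ and $\rho a^2\ge wa$. In that case I expect to charge the wide child's excess directly to the square's own area, as in the forced-square computation preceding the lemma.

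Making this precise will require a potential-style induction over the squares of the phase in arrival order. The invariant would be $w\tau(T)-wb\le\rho(\text{area so far})+\delta w^2$, combined with the child invariants. The delicate point is to ensure that the one-time credit $2\delta pq\,w^2$ is spent on only one flexible square per parent phase. All earlier flexible squares must be absorbed by the balanced weighted sum, and this is the part I would check most carefully.
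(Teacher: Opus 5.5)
\paragraph{Gap: the open-phase step is only planned.} Your outer induction on width, its base case, and the completed-phase step are fine, and you have the right ingredients for open phases. But the open-phase bound, which carries the whole lemma, is left as a plan. The worry you flag shows the accounting is not yet set up: you want the credit $2\delta pq\,w^2$ spent on only one flexible square, with the earlier ones absorbed elsewhere.

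\paragraph{The missing step.} Run your inner induction over arrival order, with the invariant $wM\le\rho A+\delta w^2$ for every prefix of the $T$-phase. Let $J$ be the next square, of side $a$.
\begin{itemize}
\item If $J$ does not raise $M$, the invariant is preserved trivially.
\item If $J$ raises $M$ and is forced, use the parent's own prefix invariant:
\[
wM_{\mathrm{after}}\le wM_{\mathrm{before}}+wa\le\rho A_{\mathrm{before}}+\delta w^2+\rho a^2 .
\]
\item If $J$ raises $M$ and is flexible, do not use the parent's invariant at all. Let $C$ be the selected child, $D$ its sibling, and $\lambda\in\{p,q\}$ the width fraction of $C$. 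Let $m_C,m_D$ be their rises above $b$ just before $J$. The new maximum must lie in $C$, and $m_C\le m_D$. So the children's bounds, applied to the prefix, give
\[
wM_{\mathrm{after}}\le wm_C+wa\le\lambda wm_C+(1-\lambda)wm_D+wa\le\rho A_{\mathrm{before}}+\delta(p^2+q^2)w^2+wa .
\]
Then $wa-\rho a^2\le\frac q4w^2=2\delta pq\,w^2$ closes it.
\end{itemize}

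\paragraph{Why the worry dissolves.} The flexible case is re-derived from the children's bounds, whose own credits $\delta\lambda^2w^2$ and $\delta(1-\lambda)^2w^2$ are still intact. So the released credit is just the difference $\delta w^2-\delta(p^2+q^2)w^2$ between two inequalities, not a resource consumed over time. Every flexible square that raises $M$ may use it afresh, and earlier squares are already counted in the children's areas.

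\paragraph{What to compare with the weighted average.} The right quantity is $wm_C$ just before $J$, not the end-of-phase imbalance. The framing in your ``Open phases'' paragraph does not close by itself. Suppose you bound the final excess by $(1-\lambda)wa$ and pay it from $2\delta pq\,w^2$. When the narrow child is the higher one, this needs $pa\le\frac q4w$ for all $a\le qw$, that is, $p\le\frac14$, which is false.

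\paragraph{Comparison with the paper.} With this step filled in, your proof coincides with the paper's. The only difference is the induction scheme. The paper uses a single strong induction on the number of squares in a phase, which supplies both the children's bounds and the parent's prefix bound at once. You use an outer induction on width and an inner one on arrival order; both are valid.
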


\begin{proof}
We use simultaneous strong induction on the number of squares in the local phase. The empty open phase satisfies~\eqref{eq:open-phase}. Assume both statements hold for all local phases with fewer than $n$ squares.

\paragraph{Open phase.}
Consider a nonempty open $T$-phase with $n$ squares. Choose the last square $J$ whose insertion strictly increases the maximum virtual rise over $I(T)$. Squares arriving later add area without increasing $M$, so it is enough to prove~\eqref{eq:open-phase} immediately after $J$. Let $a$ be the side length of $J$.

If $J$ is forced at $T$, then $a>qw$. Because the $T$-phase is open, $J$ is not assigned directly to $T$, so $a\le pw$ and the first routing step is into the wide child. Regardless of how far $J$ subsequently descends, its virtual update can increase the maximum over $I(T)$ by at most $a$. If $M_{\mathrm{before}}$ and $A_{\mathrm{before}}$ describe the prefix before $J$, the induction hypothesis gives
\[
wM_{\mathrm{before}}\le\rho A_{\mathrm{before}}+\delta w^2.
\]
Hence, since $wa-\rho a^2\le0$ for $a\ge qw$,
\[
wM_{\mathrm{after}}-\rho(A_{\mathrm{before}}+a^2)
\le\delta w^2+wa-\rho a^2
\le\delta w^2.
\]

Now suppose $J$ is flexible at $T$, so $a\le qw$. Immediately before $J$, let $C$ be the selected child and $D$ its sibling. Write
\[
w_C=\lambda w,\qquad w_D=(1-\lambda)w,\qquad\lambda\in\{p,q\}.
\]
Let $m_C$ and $m_D$ be the maximum rises above the same parent baseline $b$ in the two children, and let $A_C,A_D$ be the areas of the prefix squares routed into those children. Every prefix square is routed into exactly one child, so
\[
A_{\mathrm{before}}=A_C+A_D.
\]
Since $J$ is routed to the child of smaller virtual height,
\[
m_C\le m_D.
\]

Before $J$, child $C$ contains completed phases with rises $\Delta_1,\ldots,\Delta_k$ and areas $A_1,\ldots,A_k$, followed by at most one open phase with rise $M'$ and area $A'$. All these phases contain fewer than $n$ squares, so the induction hypotheses apply. Their rises add because each completed child phase leaves the child flat. Therefore
\[
w_C m_C
=w_C\left(\sum_{i=1}^k\Delta_i+M'\right)
\le\rho\sum_{i=1}^k A_i+\rho A'+\delta w_C^2
=\rho A_C+\delta w_C^2.
\]
If there is no final open child phase, take $M'=A'=0$ and omit the credit. The same argument for $D$ gives
\begin{align*}
\lambda wm_C&\le\rho A_C+\delta\lambda^2w^2,\\
(1-\lambda)wm_D&\le\rho A_D+\delta(1-\lambda)^2w^2.
\end{align*}

The insertion of $J$ strictly raises the parent maximum. If the selected child $C$ remained below the sibling maximum $m_D$ after the insertion, then the parent maximum would still be $m_D$ and would not have increased, contradicting the choice of $J$. Thus the new parent maximum is attained in $C$. Since the selected child's maximum increases by at most $a$,
\[
M_{\mathrm{after}}\le m_C+a.
\]
Using these child-phase bounds and $m_C\le m_D$,
\begin{align*}
wM_{\mathrm{after}}-\rho(A_C+A_D+a^2)
&\le w(m_C+a)-\lambda wm_C-(1-\lambda)wm_D\\
&\quad+\delta\bigl(\lambda^2+(1-\lambda)^2\bigr)w^2-\rho a^2\\
&=(1-\lambda)w(m_C-m_D)+wa-\rho a^2+\delta(p^2+q^2)w^2\\
&\le wa-\rho a^2+\delta(p^2+q^2)w^2\\
&\le\delta w^2.
\end{align*}
The last step uses $\delta\ge1/(8p)$. With our choice $\delta=1/(8p)$ it holds with equality. This proves~\eqref{eq:open-phase}.

\paragraph{Completed phase.}
Let the closing square have side $a>pw$. Let $M$ and $A_0$ describe the preceding open part. By the induction hypothesis,
\[
wM\le\rho A_0+\delta w^2.
\]
The closing square performs a full-slot update, so $\Delta=M+a$ and
\[
w\Delta\le\rho A_0+\delta w^2+wa.
\]
For $r=a/w\ge p$, the function $r\mapsto\rho r^2-r$ is increasing. The cubic condition~\eqref{eq:cubic} guarantees that our choice $\delta=1/(8p)$ satisfies the direct-square requirement $\delta\le\rho p^2-p$. Hence
\[
\delta\le\rho p^2-p\le\rho r^2-r.
\]
Thus
\[
\delta w^2+wa\le\rho a^2,
\]
and therefore
\[
w\Delta\le\rho(A_0+a^2).
\]
This proves~\eqref{eq:closed-phase} and closes the induction.
\end{proof}

\subsection{Global bound and parameter optimization}\label{sec:global-bound}

A square whose target is the root closes the current root phase and leaves the entire virtual profile flat. Hence the execution is a sequence of completed root phases followed by at most one open root phase.

\begin{theorem}\label{thm:parameterized}
Let $p\in(1/2,1)$ satisfy~\eqref{eq:cubic} and let $q=1-p$. Then every finite input sequence $\sigma$ satisfies
\[
\AS_p(\sigma)\le\frac1qA(\sigma)+\frac1{8p}.
\]
Consequently,
\[
\AS_p(\sigma)\le\frac1q\OPT(\sigma)+\frac1{8p}.
\]
In particular, the asymptotic competitive ratio of $\AS_p$ is at most $1/q$.
\end{theorem}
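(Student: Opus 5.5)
We apply Lemma~\ref{lem:phase} at the root slot $T_0=S$ of width $w=1$ and sum over the root phases. As noted just before the statement, any execution on $\sigma$ consists of completed root phases $1,\ldots,k$ followed by at most one open root phase. Every completed root phase ends with a full-slot update at the root. That update leaves $h$ constant on $(0,1)$, so the baseline of each root phase equals the final uniform level of the previous one. The first baseline is $0$. Telescoping, the final global virtual height is
\[
\Hvirt=\sum_{i=1}^k\Delta_i+M,
\]
where $M$ is the rise of the final open root phase. We set $M=0$ if the execution ends with a completed phase.

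Next we bound each term by~\eqref{eq:closed-phase} and~\eqref{eq:open-phase} with $w=1$:
\[
\Delta_i\le\rho A_i,\qquad M\le\rho A'+\delta .
\]
Every square of $\sigma$ passes through the root and lies in exactly one root phase, so $\sum_i A_i+A'=A(\sigma)$. This gives $\Hvirt\le\rho A(\sigma)+\delta=\frac1qA(\sigma)+\frac1{8p}$. Lemma~\ref{lem:physical}, together with the remark after it about non-boundary coordinates, shows the virtual profile dominates the physical packing at all times. Hence $\AS_p(\sigma)\le\Hvirt$. The second inequality then follows from~\eqref{eq:area-lb}, and the competitive-ratio claim follows from the definition with $\beta=1/(8p)$.

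The main point to check is that root phases partition the whole input and that the rises add. The partition holds because a phase is defined by the squares entering $T_0$, which is every square, and it ends exactly at a root-targeted square. Additivity requires the baseline of each phase to be the flat level left by the preceding completed phase. This is exactly the flatness property already used for child phases inside the proof of Lemma~\ref{lem:phase}. The last check is the hypothesis: Lemma~\ref{lem:phase} requires~\eqref{eq:cubic} with $\rho=1/q$ and $\delta=1/(8p)$, which the theorem assumes.
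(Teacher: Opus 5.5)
Your proposal is correct and follows the paper's proof. You apply the completed- and open-phase bounds of Lemma~\ref{lem:phase} at the width-$1$ root, use that root phases partition the input area, and pass to the physical height via Lemma~\ref{lem:physical} and to $\OPT$ via~\eqref{eq:area-lb}; your explicit telescoping $\Hvirt=\sum_i\Delta_i+M$ just spells out a step the paper leaves implicit.
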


\begin{proof}
Apply~\eqref{eq:closed-phase} to every completed root phase and~\eqref{eq:open-phase} to the final open root phase, whose width is $1$. The phase areas partition the input area, so
\[
\Hvirt\le\rho A(\sigma)+\delta=\frac1qA(\sigma)+\frac1{8p}.
\]
Lemma~\ref{lem:physical} bounds the physical packing height by $\Hvirt$, proving the area bound. The competitive bound then follows from~\eqref{eq:area-lb}.
\end{proof}

It remains to choose the split. The coefficient $1/(1-p)$ in Theorem~\ref{thm:parameterized} increases with $p$, so we choose the smallest $p$ satisfying~\eqref{eq:cubic}.

\begin{corollary}\label{cor:optimized}
Let $p^\star$ be the unique root in $(1/2,1)$ of
\[
16p^3-8p^2+p-1=0.
\]
Then
\[
p^\star=0.578649053069\ldots,\qquad
\rho^\star=\frac1{1-p^\star}=2.373318506305\ldots,
\]
and
\[
\AS_{p^\star}(\sigma)\le\rho^\star\OPT(\sigma)+0.216020400167\ldots.
\]
Equivalently, $\rho^\star$ is the unique real root in $(2,3)$ of
\[
8\rho^3-33\rho^2+40\rho-16=0.
\]
\end{corollary}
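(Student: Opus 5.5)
The plan is to reduce everything to Theorem~\ref{thm:parameterized} together with an analysis of the cubic $f(p):=16p^3-8p^2+p-1$ on $(1/2,1)$.

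\textbf{Step 1: a unique root.} We have $f'(p)=48p^2-16p+1$. Its roots are $(16\pm\sqrt{64})/96$, that is, $1/4$ and $1/12$. Both lie below $1/2$, so $f$ is strictly increasing on $(1/2,1)$. Moreover $f(1/2)=2-2+\tfrac12-1=-\tfrac12<0$ and $f(1)=8>0$. Hence there is a unique root $p^\star\in(1/2,1)$, and condition~\eqref{eq:cubic} holds exactly for $p\in[p^\star,1)$.

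\textbf{Step 2: the best choice of $p$.} The coefficient $1/(1-p)$ in Theorem~\ref{thm:parameterized} is increasing in $p$. Therefore the admissible $p$ giving the smallest coefficient is $p^\star$ itself. At $p^\star$ the theorem gives $\rho^\star=1/(1-p^\star)$ and the additive term $1/(8p^\star)$.

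\textbf{Step 3: numerical values.} Evaluating $f$ at $0.5786$ and $0.5787$ gives opposite signs, which brackets the root. Bisection or Newton's method then gives $p^\star=0.578649053069\ldots$. From this we read off $\rho^\star=2.3733185063\ldots$ and $1/(8p^\star)=0.2160204\ldots$.

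\textbf{Step 4: the equation for $\rho^\star$.} Substitute $p=1-1/\rho=(\rho-1)/\rho$ into $f$ and multiply by $\rho^3$:
\[
16(\rho-1)^3-8\rho(\rho-1)^2+\rho^2(\rho-1)-\rho^3 .
\]
The $\rho^3$ terms give $16-8+1-1=8$. The $\rho^2$ terms give $-48+16-1=-33$. The $\rho$ terms give $48-8=40$. The constant term is $-16$. So $\rho^\star$ is a root of
\[
g(\rho):=8\rho^3-33\rho^2+40\rho-16 .
\]
The map $p\mapsto 1/(1-p)$ is a bijection from $(1/2,1)$ onto $(2,\infty)$. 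Since $p^\star$ is the unique root of $f$ in $(1/2,1)$, $\rho^\star$ is the unique root of $g$ in $(2,\infty)$. That root lies in $(2,3)$, because $g(2)=-4<0$ and $g(3)=23>0$. To see that $g$ has no other real roots at all, check the discriminant of $g$, or check that $g$ has no root in $(-\infty,2]$; for example, the other critical value of $g$ has the same sign as the nearby values of $g$ on that side.

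\textbf{Expected difficulty.} The only step needing care is certifying the stated decimal digits. That is routine interval arithmetic and involves no conceptual obstacle.
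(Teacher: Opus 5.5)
Your proposal is correct and follows the paper's proof essentially step for step: monotonicity of $f$ from the roots $1/12,1/4$ of $f'$, minimizing $1/(1-p)$ at $p^\star$ via Theorem~\ref{thm:parameterized}, and substituting $p=1-1/\rho$. Your bijection argument for uniqueness of $\rho^\star$ in $(2,3)$ fills in a detail the paper leaves implicit; the extra claim that $g$ has no other real roots is true but not needed, and your justification for it is vague (the precise reason is that the local maximum of $g$, at $\rho=(33-\sqrt{129})/24\approx0.90$, is negative).
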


\begin{proof}
Let $f(p)=16p^3-8p^2+p-1$. We have $f(1/2)=-1/2$ and $f(1)=8$. Moreover,
\[
f'(p)=48p^2-16p+1>0
\]
for $p\in[1/2,1]$, since the two roots of $f'$ are $1/12$ and $1/4$. Thus $f$ has exactly one root in $(1/2,1)$ and~\eqref{eq:cubic} holds precisely for $p\ge p^\star$. Since $1/(1-p)$ is increasing, the coefficient of the fixed-split bound is minimized at $p^\star$. At the root of $f$, the flexible and direct credit constraints are both tight, so
\[
\frac1{8p^\star}=\frac{p^\star(2p^\star-1)}{1-p^\star}=0.216020400167\ldots.
\]
Replacing $p$ by $1-1/\rho$ in $f(p)=0$ gives the cubic in $\rho$.
\end{proof}

For an exact rational implementation, the nearby split $11{:}8$ gives the following simple guarantee.

\begin{corollary}\label{cor:rational}
For $p=11/19$ and $q=8/19$,
\[
\AS_{11/19}(\sigma)\le\frac{19}{8}\OPT(\sigma)+\frac{19}{88}.
\]
Hence $\AS_{11/19}$ has asymptotic competitive ratio at most $19/8=2.375$.
\end{corollary}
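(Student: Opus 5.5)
This is a direct instantiation of Theorem~\ref{thm:parameterized} with $p=11/19$, $q=8/19$; the only thing to check is that this $p$ satisfies the cubic condition~\eqref{eq:cubic}.

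First, note that $p=11/19\in(1/2,1)$, so the split is admissible. Next, evaluate $f(p)=16p^3-8p^2+p-1$ at $p=11/19$ by multiplying through by $19^3=6859$. The terms are
\[
16\cdot 1331-8\cdot 121\cdot 19+11\cdot 361-6859 = 21296-18392+3971-6859 = 16 > 0 .
\]
Hence~\eqref{eq:cubic} holds. Equivalently, $11/19\approx0.5789$ exceeds $p^\star\approx0.57865$, and by the monotonicity shown in the proof of Corollary~\ref{cor:optimized}, $f(p)\ge0$ exactly for $p\ge p^\star$.

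Then apply Theorem~\ref{thm:parameterized}. It gives coefficient $1/q=19/8$ and additive term $1/(8p)=19/88$, and this yields
\[
\AS_{11/19}(\sigma)\le\tfrac{19}{8}\OPT(\sigma)+\tfrac{19}{88}.
\]
The competitive ratio statement $\CR(\AS_{11/19})\le 19/8=2.375$ follows from the definition of asymptotic competitive ratio with $\beta=19/88$.

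The only step needing any care is the sign of $f(11/19)$. The margin is small, $16/6859$, so the integer arithmetic should be double-checked; apart from that the proof is routine.
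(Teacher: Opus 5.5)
Your proposal is correct and matches the paper's proof. You check $f(11/19)=16/6859>0$ (your integer arithmetic $21296-18392+3971-6859=16$ is right), apply Theorem~\ref{thm:parameterized}, and read off $1/q=19/8$ and $1/(8p)=19/88$; the cross-check $11/19>p^\star$ is consistent but not needed.
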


\begin{proof}
A direct calculation gives
\[
16\left(\frac{11}{19}\right)^3-8\left(\frac{11}{19}\right)^2+\frac{11}{19}-1
=\frac{16}{6859}>0.
\]
Thus Theorem~\ref{thm:parameterized} applies. Also
\[
\frac1q=\frac{19}{8},\qquad \frac1{8p}=\frac{19}{88}.
\]
\end{proof}

Because the rationals are dense and both $p\mapsto16p^3-8p^2+p-1$ and $p\mapsto1/(1-p)$ are continuous, for every $\varepsilon>0$ there exists a rational $\widehat p>p^\star$ satisfying~\eqref{eq:cubic} such that $\AS_{\widehat p}$ has asymptotic competitive ratio at most $\rho^\star+\varepsilon$.

\paragraph{A lower bound for \AsymmetricSlots{}.}
The square upper bound is not known to be tight. The following repeated-square construction gives a lower bound of $2$ throughout a parameter range containing both $p^\star$ and $11/19$.

\begin{proposition}\label{prop:lower}
Let
\[
\frac12<p<\frac{\sqrt5-1}{2},\qquad q=1-p.
\]
Then the asymptotic competitive ratio of $\AS_p$ is at least $2$. In particular,
\[
\CR(\AS_{p^\star})\ge2
\qquad\text{and}\qquad
\CR(\AS_{11/19})\ge2.
\]
\end{proposition}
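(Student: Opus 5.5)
The plan is to use a single repeated square size that the routing rule sends to the same slot every time, while the optimal packing fits two such squares side by side. The parameter range enters only through the identity
\[
p<\tfrac{\sqrt5-1}{2}\iff p^2+p-1<0\iff p^2<q.
\]
Fix a small $\varepsilon>0$ with $q+\varepsilon\le\min(p,1/2)$, and let $a:=q+\varepsilon$. This is possible since $q<1/2<p$. Feed $n$ copies of a square of side $a$.

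\textbf{Routing.} I trace the descent in Algorithm~\ref{alg:as-insert}.
\begin{itemize}
\item At the root ($w=1$) we have $q<a\le p$, so the square is forced into the wide child $T_w$, of width $p$.
\item At $T_w$ the thresholds are $p\cdot p=p^2$ and $qp$. Since $a>q>p^2$, the direct case applies, so the target slot is $T_w$.
\end{itemize}
This routing is independent of the current profile, so every copy has target $T_w$. By Lemma~\ref{lem:physical}, each copy is aligned at the left endpoint $x(T_w)=0$ and lowered until it is supported. All copies therefore occupy the same horizontal span $[0,a]$. Each new copy rests exactly on top of the previous one, since that copy is the highest object over this span. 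The physical height is thus exactly $na$.

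\textbf{Optimal packing.} Since $2a\le1$, two copies fit side by side. Stacking $\lceil n/2\rceil$ such rows, bottom-up, is a feasible packing even in the online Tetris-and-gravity sense, so $\OPT(\sigma_n)\le\lceil n/2\rceil a$.

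\textbf{Conclusion.} Suppose $\AS_p(\sigma)\le\rho\OPT(\sigma)+\beta$ for all $\sigma$. Then
\[
na\le\rho\lceil n/2\rceil a+\beta\quad\text{for all }n.
\]
Dividing by $na$ and letting $n\to\infty$ gives $\rho\ge2$, hence $\CR(\AS_p)\ge2$.

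\textbf{Specific parameters.} It remains to check that both $p^\star\approx0.5786$ and $11/19\approx0.5789$ lie in $(1/2,0.618\ldots)$. For $11/19$ this is the exact check $(11/19)^2+11/19-1=121/361+209/361-361/361=-31/361<0$.

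\textbf{Main obstacle.} The only delicate point is confirming that the physical height really equals the stacked height $na$, and is not merely bounded above by $\Hvirt$. This follows because Lemma~\ref{lem:physical} fixes the horizontal position at $x(T_w)$, so every copy lands on exactly the same span.
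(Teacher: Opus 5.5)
Your proposal is correct and follows the paper's proof essentially step for step. The construction is the same: a repeated square of side just above $q$, forced into the wide child at the root and then assigned directly there because $q>p^2$, stacks into a single column of height $na$ via the left-aligned placement of Lemma~\ref{lem:physical}, while the offline two-column packing has height $\lceil n/2\rceil a$; your allowance of $a=1/2$ and your exact check $(11/19)^2+11/19-1=-31/361<0$ are harmless refinements.
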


\begin{proof}
The assumption $p<(\sqrt5-1)/2$ is equivalent to $q>p^2$. Choose
\[
a=q+\eta,\qquad0<\eta<p-\frac12.
\]
Then $q<a<1/2<p$, so at the root the square fits the wide child but not the narrow child and is forced into the wide child. Since $a>q>p^2$, it is assigned directly to that wide child at the next routing step. Every repeated copy therefore has the same target slot. Because the physical square is aligned with the target slot's left endpoint, the copies stack in one column and $\AS_p$ produces height $na$ on $n$ copies.

On the other hand, $2a<1$, so an offline packing can alternate the squares between two side-by-side columns and attain height $\lceil n/2\rceil a$. This packing is itself feasible under the Tetris and gravity constraints in arrival order. The ratio therefore tends to $2$ as $n\to\infty$.
\end{proof}

The general lower bound of Fekete, Kamphans, and Schweer is $3/2$~\cite{FeketeKamphansSchweer2014}. Proposition~\ref{prop:lower} is algorithm-specific and shows that $\CR(\AS_p)\ge2$ throughout the stated parameter range. Combining it with Corollary~\ref{cor:optimized} gives
\[
2\le\CR(\AS_{p^\star})\le2.373318506\ldots.
\]
For the general deterministic square-packing problem, the gap between $3/2$ and the upper bound proved here remains substantial.

For squares, the flexible-square proof uses only the sign of the child-height difference and discards the magnitude of the resulting balancing gain. An imbalance-dependent phase potential could retain this information. Other directions include non-self-similar splits, more than two subslots, and analyses based on lower bounds stronger than total area.

\section{Rectangles of Bounded Aspect Ratio}\label{sec:rectangles}

\paragraph{Upper bound.}
Having established the square bound, we next ask how much of the argument depends on the equality of width and height. The same framework in fact extends to rectangles of bounded aspect ratio. Fix $\kappa\ge1$, and let $\sigma=(R_1,\ldots,R_n)$ be an input of axis-parallel rectangles. Rectangle $R_i$ has width $u_i\in(0,1]$ and height $v_i>0$, with
\[
\frac{u_i}{\kappa}\le v_i\le \kappa u_i.
\]
Rotations are not allowed. For this section, let $A(\sigma):=\sum_i u_i v_i$ denote the total rectangle area, and let $\OPT(\sigma)$ denote the offline optimum height for the rectangle input.

We use the same slot hierarchy and route each rectangle according to its width. If the target of a rectangle of width $u$ and height $v$ is a slot $T$, the virtual update raises the whole slot by $v$. The physical rectangle is aligned with the left endpoint of $T$ and lowered to support exactly as in Subsection~\ref{sec:physical}. We continue to write $\AS_p(\sigma)$ for the height produced by this rectangle extension. The proof of Lemma~\ref{lem:physical} is unchanged after replacing the square side length by $u$ in the horizontal argument and by $v$ in the vertical update.

The phase induction is unchanged as well. Only the three local accounting inequalities need to be modified. Let $\varrho\ge1/q$ be a candidate competitive coefficient, and let $\delta_{\mathrm R}w^2$ be the open-phase credit. A forced rectangle has $u>qw$, and hence
\[
wv-\varrho uv=v(w-\varrho u)\le0.
\]
For a flexible rectangle, $u\le qw$. If $w-\varrho u>0$, then the aspect-ratio bound gives
\[
wv-\varrho uv\le \kappa u(w-\varrho u)\le \frac{\kappa}{4\varrho}w^2,
\]
where the last expression is the maximum of the quadratic in $u$. If $w-\varrho u\le0$, there is no shortfall. Thus the flexible case is covered whenever
\[
\delta_{\mathrm R}\ge \frac{\kappa}{8\varrho pq}.
\]
Finally, a direct rectangle has $u>pw$. Since $\varrho\ge1/q$ and $p>q$, we have $\varrho p>1$, so $r\mapsto r(\varrho r-1)$ is increasing for $r\ge p$. Together with $v\ge u/\kappa$, the closing rectangle repays the phase credit whenever
\[
\delta_{\mathrm R}\le \frac{p(\varrho p-1)}{\kappa}.
\]
The two requirements are compatible exactly when
\begin{equation}
\kappa^2\le 8\varrho p^2q(\varrho p-1).
\label{eq:rect-condition}
\end{equation}

\begin{theorem}\label{thm:rectangles}
Fix $\kappa\ge1$ and $p\in(1/2,1)$, with $q=1-p$. If
\[
\varrho\ge\frac1q
\qquad\text{and}\qquad
\kappa^2\le 8\varrho p^2q(\varrho p-1),
\]
then the rectangle version of \AsymmetricSlots{} satisfies
\[
\AS_p(\sigma)\le \varrho A(\sigma)+\frac{\kappa}{8\varrho pq}
\le \varrho\OPT(\sigma)+\frac{\kappa}{8\varrho pq}.
\]
In particular, for every fixed $\kappa$ its asymptotic competitive ratio is at most $\varrho$.
\end{theorem}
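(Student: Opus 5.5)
The proof will reproduce the argument of Lemma~\ref{lem:phase} and Theorem~\ref{thm:parameterized}, with the three local accounting inequalities replaced by their rectangle versions derived just above. We set $\delta_{\mathrm R}:=\kappa/(8\varrho pq)$, which is the smallest credit that covers flexible rectangles. The hypothesis~\eqref{eq:rect-condition} then says exactly that $\delta_{\mathrm R}\le p(\varrho p-1)/\kappa$. Physical feasibility and domination by the virtual profile come from Lemma~\ref{lem:physical}, applied with width $u$ in the horizontal argument and height $v$ in the vertical update. So the physical height is at most $\Hvirt$, and it remains to show $\Hvirt\le\varrho A(\sigma)+\delta_{\mathrm R}$.

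First we prove the rectangle analogue of Lemma~\ref{lem:phase}. For every slot $T$ of width $w$ the claim is
\[
wM\le\varrho A+\delta_{\mathrm R}w^2 \text{ for open phases},\qquad w\Delta\le\varrho A \text{ for completed phases}.
\]
The proof is by the same simultaneous strong induction on the number of rectangles in the phase.

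In the open case we again take the last rectangle $J$ that strictly raises $M$.
\begin{itemize}
\item If $J$ is forced, then $u>qw$. Its virtual update raises the maximum by at most $v$, and $wv-\varrho uv=v(w-\varrho u)\le0$ because $\varrho\ge 1/q$ gives $\varrho u>w$.
\item If $J$ is flexible, the child decomposition is unchanged. Completed child phases leave the child flat, so child rises add, and each child satisfies the inductive bound with credit $\delta_{\mathrm R}\lambda^2w^2$. Since $J$ is routed to the lower child, $m_C\le m_D$ still holds, the new maximum is attained in $C$, and $M_{\mathrm{after}}\le m_C+v$.
\end{itemize}
In the flexible case the same algebra then gives
\[
wM_{\mathrm{after}}-\varrho(A_C+A_D+uv)\le wv-\varrho uv+\delta_{\mathrm R}(p^2+q^2)w^2 .
\]
This is at most $\delta_{\mathrm R}w^2$ because $wv-\varrho uv\le\kappa w^2/(4\varrho)=2pq\,\delta_{\mathrm R}w^2$. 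Here we use $v\le\kappa u$ when $w-\varrho u>0$; otherwise the left side is nonpositive.

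In the completed case the closing rectangle has $u>pw$. We need $\delta_{\mathrm R}w^2+wv\le\varrho uv$, that is, $\delta_{\mathrm R}w^2\le v(\varrho u-w)$. Since $\varrho u-w>0$ and $v\ge u/\kappa$, the right side is at least $w^2r(\varrho r-1)/\kappa$ with $r=u/w\ge p$. This expression is increasing in $r$, so it is at least $w^2p(\varrho p-1)/\kappa\ge\delta_{\mathrm R}w^2$ by the hypothesis. Then $w\Delta\le\varrho(A_0+uv)$.

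The main point to check carefully is this direct step. The bound $v\ge u/\kappa$ may be used only after confirming that the factor $\varrho u-w$ is positive, so that replacing $v$ by a smaller value weakens the bound in the right direction. The flexible step has the analogous sign issue, which is handled by the case split on the sign of $w-\varrho u$.

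Finally we argue globally, as in Theorem~\ref{thm:parameterized}. The execution consists of completed root phases and at most one open root phase of width $1$. Rectangle areas partition $A(\sigma)$, so $\Hvirt\le\varrho A(\sigma)+\delta_{\mathrm R}$. The unit strip width gives $A(\sigma)\le\OPT(\sigma)$, which yields the stated inequality. The additive term is constant for fixed $\kappa$, so the asymptotic competitive ratio is at most $\varrho$.
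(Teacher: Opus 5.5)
Your proposal is correct and follows the paper's proof. You set $\delta_{\mathrm R}=\kappa/(8\varrho pq)$, note that~\eqref{eq:rect-condition} makes it satisfy both the flexible and direct credit inequalities, rerun the induction of Lemma~\ref{lem:phase} with $uv$ and $v$ in place of $a^2$ and $a$, and sum over root phases using $A(\sigma)\le\OPT(\sigma)$. Your explicit sign checks spell out details the paper handles only in the accounting paragraph before the theorem: confirming $\varrho u-w>0$ before invoking $v\ge u/\kappa$ in the direct step, and splitting on the sign of $w-\varrho u$ in the flexible step.
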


\begin{proof}
Choose $\delta_{\mathrm R}=\kappa/(8\varrho pq)$. The compatibility condition~\eqref{eq:rect-condition} ensures that this choice satisfies both the flexible and direct credit inequalities. Repeating the proof of Lemma~\ref{lem:phase}, with rectangle area $uv$ and virtual height increment $v$ in place of $a^2$ and $a$, gives the same open- and completed-phase bounds. Summing the root phases and using $A(\sigma)\le\OPT(\sigma)$ proves the claim.
\end{proof}

For fixed $p$, the smallest coefficient certified by Theorem~\ref{thm:rectangles} is
\begin{equation}
\rho_\kappa(p)
=\max\left\{
\frac1{1-p},
\frac{1+\sqrt{1+\kappa^2/(2p(1-p))}}{2p}
\right\}.
\label{eq:rect-rho}
\end{equation}
When $\kappa=1$, optimizing~\eqref{eq:rect-rho} recovers Corollary~\ref{cor:optimized}. For a simple bound valid for every $\kappa$, take $p=3/4$. Then
\[
\rho_\kappa(3/4)
=\max\left\{
4,
\frac23\left(1+\sqrt{1+\frac83\kappa^2}\right)
\right\}
=O(\kappa).
\]
Thus bounded aspect ratio is sufficient for a constant competitive ratio. The lower bound below gives a matching $\Omega(\kappa)$ dependence. In particular, the order of growth in~\eqref{eq:rect-rho} is best possible, although the constant factor may be improvable.

\paragraph{Lower bound.}
Azar and Epstein already proved that unrestricted nonrotatable rectangles admit no constant competitive ratio in their Tetris model~\cite{AzarEpstein1997}. The following proposition complements this by giving a linear lower bound in the aspect-ratio bound for the present Tetris-and-gravity model, matching Theorem~\ref{thm:rectangles} up to constants. Figure~\ref{fig:rectangle-lower-bound} illustrates the construction for $m=4$: online, the alternating sequence is forced into a vertical chain, whereas offline the thin vertical rectangles can be packed side by side and the wide shallow rectangles stacked above them.

\begin{figure}[H]
  \centering
  \begin{tikzpicture}[x=3.0cm,y=1.0cm,font=\scriptsize]
  \begin{scope}
    \node[font=\small] at (0.5,5.15) {(a) online: forced chain};
    \draw[thick] (0,0) rectangle (1,4.75);

    \filldraw[fill=blue!18]   (0.00,0.00) rectangle (0.25,1.00);
    \node at (0.125,0.50) {$V_1$};
    \filldraw[fill=orange!24] (0.00,1.00) rectangle (0.875,1.25);
    \node at (0.4375,1.125) {$H_1$};

    \filldraw[fill=blue!18]   (0.00,1.25) rectangle (0.25,2.25);
    \node at (0.125,1.75) {$V_2$};
    \filldraw[fill=orange!24] (0.00,2.25) rectangle (0.875,2.50);
    \node at (0.4375,2.375) {$H_2$};

    \filldraw[fill=blue!18]   (0.00,2.50) rectangle (0.25,3.50);
    \node at (0.125,3.00) {$V_3$};
    \filldraw[fill=orange!24] (0.00,3.50) rectangle (0.875,3.75);
    \node at (0.4375,3.625) {$H_3$};

    \filldraw[fill=blue!18]   (0.00,3.75) rectangle (0.25,4.75);
    \node at (0.125,4.25) {$V_4$};

    \node[below] at (0,0) {$0$};
    \node[below] at (1,0) {$1$};
    \node[align=center] at (0.5,-0.45) {$V: \frac14\times 1,\quad H: \frac78\times \frac14$};
  \end{scope}

  \begin{scope}[xshift=4.25cm]
    \node[font=\small] at (0.5,5.15) {(b) offline: compact packing};
    \draw[thick] (0,0) rectangle (1,1.75);

    \foreach \i/\lab in {0/V_1,1/V_2,2/V_3,3/V_4} {
      \filldraw[fill=blue!18] (0.25*\i,0.00) rectangle (0.25+0.25*\i,1.00);
      \node at (0.125+0.25*\i,0.50) {$\lab$};
    }

    \filldraw[fill=orange!24] (0.00,1.00) rectangle (0.875,1.25);
    \node at (0.4375,1.125) {$H_1$};
    \filldraw[fill=orange!24] (0.00,1.25) rectangle (0.875,1.50);
    \node at (0.4375,1.375) {$H_2$};
    \filldraw[fill=orange!24] (0.00,1.50) rectangle (0.875,1.75);
    \node at (0.4375,1.625) {$H_3$};

    \node[below] at (0,0) {$0$};
    \node[below] at (1,0) {$1$};
  \end{scope}
\end{tikzpicture}
  \caption{Illustration of the rectangle lower-bound construction for $m=4$. Left: each consecutive pair has total width $1+1/(2m)>1$, so the alternating input $V_1,H_1,V_2,H_2,V_3,H_3,V_4$ is forced to preserve its vertical order online. Right: offline, the four $V$-rectangles fit side by side, while the three $H$-rectangles can be stacked above them. The proof repeats this pattern with arbitrarily many rectangles.}
  \label{fig:rectangle-lower-bound}
\end{figure}

\begin{proposition}\label{prop:rectangle-lower}
For every integer $m\ge2$, every online algorithm for nonrotatable rectangles under the Tetris and gravity constraints has asymptotic competitive ratio at least
\[
\frac{m+1}{2}
\]
on instances whose rectangles have aspect ratio at most $m$. Consequently, for rectangles of aspect ratio at most $\kappa\ge2$, the optimal competitive ratio is at least $\kappa/2$.
\end{proposition}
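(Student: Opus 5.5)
I will follow the construction in Figure~\ref{fig:rectangle-lower-bound}, scaled to general $m$. Fix $m\ge2$ and an integer $N$. Use $V$-rectangles of width $1/m$ and height $1$ (aspect ratio $m$), and $H$-rectangles of width $1-1/(2m)$ and height $1/m$. The aspect ratio of an $H$-rectangle is $(1-1/(2m))\cdot m<m$. The input alternates $V_1,H_1,V_2,H_2,\ldots,H_{N-1},V_N$. The key geometric fact is that any $V$ and any $H$ have total width $1+1/(2m)>1$, so their horizontal projections must overlap in an interval of positive length, and their interiors therefore cannot share a height level. Two $H$'s also overlap horizontally, since each has width greater than $1/2$.

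For the online lower bound, I will argue that every feasible online packing has height at least $N+(N-1)/m$. The claim is that each newly arriving rectangle lies entirely above all previously placed rectangles, up to boundaries. Take the arriving rectangle $X$ and the immediately preceding rectangle $Y$. One of them is a $V$ and the other an $H$, so they overlap horizontally. By the Tetris constraint, $X$ descends from above along a monotone path. It therefore cannot pass below $Y$ within their common column range, because $Y$ blocks every $x$-position where they overlap.

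To make this rigorous, I will use induction: all earlier rectangles lie at or below the top $t$ of $Y$. Suppose $X$ ends with its bottom below $t$. Its horizontal span meets $Y$'s span, so at the moment $X$'s bottom passes level $t$ its projection must avoid $Y$'s projection. That is impossible, since every horizontal position of $X$ inside the unit strip overlaps $Y$, the total width exceeding $1$. Hence $X$'s bottom is at least $t$, and the heights add up to $N\cdot1+(N-1)/m$.

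The main subtlety is exactly this step: $X$ moves continuously, and $Y$ need not be the highest object at every $x$. The width-sum argument handles it, because any position of $X$ in $[0,1]$ overlaps $Y$ regardless of where $Y$ sits.

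For the offline side, place the $V$'s in rows of $m$ side by side. Each row has height $1$, giving $\lceil N/m\rceil$ rows. Stack the $N-1$ $H$'s on top, contributing $(N-1)/m$. The offline height is at most $N/m+1+(N-1)/m\approx 2N/m$.

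This packing is also feasible in arrival order, although that is not needed since $\OPT$ is offline. The ratio is then
\[
\frac{N+(N-1)/m}{2N/m+O(1)}\to\frac{m+1}{2}.
\]
Any additive constant $\beta$ is absorbed as $N\to\infty$, which gives the asymptotic lower bound $(m+1)/2$.

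For $\kappa\ge2$, take $m=\lfloor\kappa\rfloor\ge2$. Then $(m+1)/2>\kappa/2$, because $m+1>\kappa$.
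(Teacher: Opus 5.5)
Your proof is correct and is essentially the paper's argument: the same $V$/$H$ gadget, the same observation that consecutive widths sum to $1+1/(2m)>1$ so each rectangle must end above its predecessor, the same rows-of-$m$ offline packing, and the same choice $m=\lfloor\kappa\rfloor$. Delete one side remark, though: the offline packing is \emph{not} feasible in arrival order (for $N\ge2$, $V_2$ cannot get beneath $H_1$, and your own chain argument shows every arrival-order-feasible packing has height at least $N+(N-1)/m$); nothing in the proof depends on it, since $\OPT$ may place items in any order.
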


\begin{proof}
Let $V$ be a rectangle of width $1/m$ and height $1$, and let $H$ be a rectangle of width $1-1/(2m)$ and height $1/m$. Their aspect ratios are $m$ and $m-1/2$, respectively. For an integer $N$, let
\[
\sigma_N=(V_1,H_1,V_2,H_2,\ldots,V_{N-1},H_{N-1},V_N)
\]
be the alternating input sequence.
The widths of every two consecutive rectangles sum to
\[
\frac1m+1-\frac1{2m}=1+\frac1{2m}>1.
\]
Hence their horizontal interiors overlap for every pair of horizontal positions in the unit-width strip. A later rectangle therefore cannot pass below its predecessor along a collision-free path from above: changing their vertical order would force a moment at which their vertical interiors overlap, while their horizontal interiors necessarily overlap as well. Thus each rectangle must lie above its predecessor. For any online algorithm $\mathcal A$,
\[
\mathcal A(\sigma_N)\ge N+\frac{N-1}{m}.
\]

Offline, the $N$ copies of $V$ can be packed in rows of $m$ rectangles, and the $N-1$ copies of $H$ can be stacked above them. Hence
\[
\OPT(\sigma_N)\le \left\lceil\frac Nm\right\rceil+\frac{N-1}{m}.
\]
Taking $N$ through multiples of $m$ and letting $N\to\infty$ gives
\[
\CR(\mathcal A)\ge
\lim_{N\to\infty}
\frac{N+(N-1)/m}{N/m+(N-1)/m}
=\frac{m+1}{2}.
\]
For a real aspect-ratio bound $\kappa\ge2$, take $m=\lfloor\kappa\rfloor$. Since $m+1>\kappa$, the lower bound is greater than $\kappa/2$.
\end{proof}

Together with Theorem~\ref{thm:rectangles}, Proposition~\ref{prop:rectangle-lower} shows that the optimal dependence on the maximum aspect ratio is $\Theta(\kappa)$. Letting $m\to\infty$ also recovers the fact that unrestricted nonrotatable rectangles admit no uniform constant competitive ratio.

For bounded-aspect-ratio rectangles, the order of growth is now settled. Improving the constants in the upper and lower bounds remains open.

\section*{Use of generative AI.}
OpenAI's GPT-5.6 Sol was used iteratively to assist with language and presentation, identifying potentially relevant literature, exploring candidate proof strategies, and generating code for packing simulations and a Lean formalization. The author, who takes full responsibility for the paper.

\bibliographystyle{abbrv}
\bibliography{references}

\end{document}